\documentclass[11pt]{article}

\usepackage{PRIMEarxiv}

\usepackage{amsmath,amssymb,amsthm}
\usepackage{hyperref}
\usepackage{tikz}
\usepackage{xcolor}
\usepackage{orcidlink}
\usepackage{booktabs}
\usetikzlibrary{arrows.meta,positioning}

\definecolor{royalblue}{RGB}{65,105,225}
\newcommand{\michael}[1]{{\color{black}#1}}
\newcommand{\andre}[1]{\color{black}#1}
\title{Selecting a Maximum Solow-Polasky Diversity Subset in General Metric Spaces Is NP-hard}

\author{
Michael T.\ M.\ Emmerich\,\orcidlink{0000-0002-7342-2090}\\
Faculty of Information Technology, University of Jyv\"askyl\"a, Finland
\And
Ksenia Pereverdieva\,\orcidlink{0000-0002-9111-8359}\\
Leiden Center of Advanced Computer Science, Leiden University, The Netherlands
\And
Andr\'e Deutz\,\orcidlink{0000-0002-9047-6533}\\
Leiden Center of Advanced Computer Science, Leiden University, The Netherlands
}
\date{}

\newtheorem{theorem}{Theorem}
\newtheorem{lemma}{Lemma}
\newtheorem{definition}{Definition}
\newtheorem{proposition}{Proposition}
\newtheorem{remark}{Remark}
\newtheorem{corollary}{Corollary}

\begin{document}

 \maketitle

\begin{abstract}
The Solow--Polasky diversity indicator (or magnitude) is a classical measure of diversity
based on pairwise distances. It has applications in ecology, conservation
planning, and, more recently, in algorithmic subset selection and
diversity optimization.
In this note, we investigate the computational complexity of selecting a subset of fixed cardinality from a finite set so as to maximize the Solow--Polasky diversity value.
We prove that this problem is NP-hard in general metric spaces.
The reduction is from the classical Independent Set problem and uses
a simple metric construction containing only two non-zero distance values.
\michael{Importantly, the hardness result holds for every fixed kernel parameter
$\theta_0>0$; equivalently, by rescaling the metric, one may fix the
parameter to $1$ without loss of generality.}
A central point is that this is not a boilerplate reduction: because
the Solow--Polasky objective is defined through matrix inversion, it is
a nontrivial nonlinear function of the distances.
Accordingly, the proof requires a dedicated strict-monotonicity argument
for the specific family of distance matrices arising in the reduction;
this strict monotonicity is established here for that family, but it is
not assumed to hold in full generality.
We also explain how the proof connects to continuity and monotonicity
considerations for diversity indicators.
\end{abstract}

\section{Introduction}

This paper proves that, for every fixed $\theta_0>0$, the problem of selecting a subset of cardinality $k$ from a given set $X$, with $k < |X|$, that maximizes the Solow--Polasky diversity is NP-hard in general metric spaces. This closes a
research gap made explicit in the recent comparative study of
Pereverdieva et al.~\cite{Pereverdieva2025}, where subset selection is
treated as a central theoretical property of diversity indicators: their
results establish NP-hardness for Riesz $s$-energy subset selection,
report NP-hardness for Max--min diversity, and leave the corresponding
entry for Solow--Polasky marked as open. More broadly, existing
complexity research on diversity has mainly focused either on diversity
objectives with a more direct metric or combinatorial structure, such as
diversity maximization in doubling metrics~\cite{Pellizzoni2023}, or on
the parameterized complexity of finding diverse collections of discrete
solutions~\cite{Baste2022}. By contrast, Solow--Polasky diversity is
defined through the inverse of a similarity matrix rather than through a
direct local function of the pairwise distances, so its hardness does
not follow in any immediate way from those earlier results.

Diversity measures based on pairwise distances play an important role
in ecology, conservation biology, and optimization.
One influential measure is the indicator proposed by
Solow and Polasky~\cite{SolowPolasky1994}, which evaluates the diversity
of a set through the inverse of a similarity matrix derived from an
exponential kernel. 

Let $S=\{x_1,\dots,x_k\}$ be a finite set equipped with pairwise distances
$d(x_i,x_j)$.
For a parameter $\theta>0$, the Solow--Polasky diversity is defined as

\[
SP_\theta(S)=\mathbf{1}^{\top} Z^{-1}\mathbf{1},
\]

where the similarity matrix $Z$ is

\[
Z_{ij}=e^{-\theta d(x_i,x_j)}.
\]

The vector $\mathbf{1}$ denotes the all-ones vector. {\andre Note that $SP_\theta(S)$ is equal to $\sum_j w_j$ where $w$ is the unique column vector satisfying $Z w = \mathbf{1}$ (in case $Z$ is invertible). }

Beyond ecological applications, diversity indicators have become relevant
for algorithmic subset selection and diversity optimization.
In this setting one is often given a finite set of candidate solutions
and asked to return a subset of prescribed cardinality that is as
diverse as possible according to a chosen indicator.
This perspective has been emphasized, among others, in the diversity
optimization work of Ulrich and coauthors~\cite{UlrichBaderThiele2010,Ulrich2012}
and more recently in the comparative study of Pereverdieva et al.~\cite{Pereverdieva2025}.
In particular, Ulrich explicitly formulates the problem of selecting,
from a given population, a subset of prescribed cardinality that maximizes
the Solow--Polasky diversity~\cite{Ulrich2012}. However, that work treats
the resulting optimization problem heuristically: exhaustive search over
all candidate subsets is noted to be infeasible due to combinatorial
explosion, and greedy deletion algorithms with polynomial running time
are proposed instead~\cite{UlrichBaderThiele2010,Ulrich2012}.

A closely related viewpoint comes from the theory of \emph{magnitude}.
\michael{Leinster~\cite{Leinster2021} provides a broad axiomatic treatment of
entropy and diversity, including similarity-based diversity measures and
their relation to magnitude. Also 
Huntsman~\cite{Huntsman2023} discusses the relationship between
Solow--Polasky diversity and the formally equivalent concept of magnitude
for exponential kernels, and explores its relevance in diversity
optimization.}
From that perspective, the subset selection problem studied here can
also be interpreted as the problem of maximizing the magnitude of a
chosen subset.

In this paper we study the following optimization problem.

\begin{definition}[Solow--Polasky subset selection problem]
\michael{Fix a constant $\theta_0>0$.
An instance consists of a finite metric space $(X,d)$ and an integer $k$
with $0\le k\le |X|$.
The task is to compute a subset
\[
S^* \in \arg\max_{S\subseteq X,\ |S|=k} SP_{\theta_0}(S).
\]
Equivalently, one may consider the associated decision problem asking,
for a given threshold $T$, whether there exists a subset $S\subseteq X$
with $|S|=k$ such that
\[
SP_{\theta_0}(S)\ge T.
\]}
\end{definition}

\begin{remark}[On fixing the kernel parameter]
\michael{Although the Solow--Polasky diversity is written with a parameter $\theta>0$,
this parameter does not need to be treated as part of the input from a
complexity-theoretic point of view.
Indeed, the similarity matrix is defined by
\[
Z_{ij}=e^{-\theta d(x_i,x_j)},
\]
so the objective depends on $\theta$ and the metric $d$ only through the
products $\theta d(x_i,x_j)$.
Hence any instance with parameter $\theta>0$ is equivalent, by the metric
rescaling
\[
d'(x_i,x_j)=\theta\, d(x_i,x_j),
\]
to an instance with fixed parameter $1$, because then
\[
e^{-d'(x_i,x_j)}=e^{-\theta d(x_i,x_j)}.
\]
Therefore one may fix the kernel parameter to any positive constant without
loss of generality.

In the proof below, however, we keep an arbitrary fixed constant $\theta_0>0$
explicit and choose the scale parameter $\lambda$ as a function of $\theta_0$.
This is only used to guarantee that the off-diagonal similarities
$q=e^{-\theta_0\lambda}$ and $r=e^{-2\theta_0\lambda}$ are sufficiently small
for the strict-monotonicity argument based on the Neumann-series bound.
Thus $\theta_0$ is not an input variable of the optimization problem; rather,
it is a fixed constant for which the reduction is calibrated.}
\end{remark}

Our main result is the following.

\begin{theorem}
For every fixed constant $\theta_0>0$, the problem of selecting, from a finite set $X$, a subset $S \subseteq X$ of cardinality $k<|X|$ that maximizes the Solow--Polasky diversity is NP-hard in general metric spaces.
\end{theorem}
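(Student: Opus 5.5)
We reduce from Independent Set. Given a graph $G=(V,E)$ with $n=|V|$ and an integer $k$, we build the metric space $X=V$ with
\[
d(u,v)=\lambda \text{ if } \{u,v\}\in E,\qquad d(u,v)=2\lambda \text{ if } \{u,v\}\notin E,\ u\neq v,
\]
where $\lambda>0$ is a constant depending only on $\theta_0$. The triangle inequality holds automatically, because any two non-zero distances lie in $[\lambda,2\lambda]$ and therefore $d(u,w)\le 2\lambda\le d(u,v)+d(v,w)$. We may assume $k<n$, since otherwise the Independent Set instance is trivial. The construction is clearly polynomial.

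Set $q=e^{-\theta_0\lambda}$ and $r=e^{-2\theta_0\lambda}=q^2$. For a $k$-subset $S$, write $Z_S=(1-r)I+rJ+(q-r)A_S$, where $A_S$ is the adjacency matrix of the induced subgraph $G[S]$.

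The claim to prove is that $SP_{\theta_0}(S)$ attains the value $\tau_k:=k/(1+(k-1)r)$ exactly when $S$ is independent, and is strictly smaller otherwise. The independent case is immediate: then $Z_S=(1-r)I+rJ$, its row sums all equal $1+(k-1)r$, and so $w=\mathbf 1/(1+(k-1)r)$. With this dichotomy, $G$ has an independent set of size $k$ iff the optimum is $\ge \tau_k$, and any exact maximizer reveals the answer. The threshold $\tau_k$ is a rational expression in $r$. If one worries about encoding irrational values, the decision version can be stated with threshold $\tau_k$ given symbolically, or we simply note that the reduction targets the optimization problem.

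The main obstacle is the strict inequality for non-independent $S$. One route is a monotonicity statement: within the family of matrices $Z(t)=(1-r)I+rJ+tA$ with $t\in[0,q-r]$, the quantity $f(t)=\mathbf 1^\top Z(t)^{-1}\mathbf 1$ is strictly decreasing whenever $A\neq0$. Its derivative is
\[
f'(t)=-w(t)^\top A\, w(t),\qquad w(t)=Z(t)^{-1}\mathbf 1.
\]
It would therefore suffice to show that $w(t)$ is entrywise positive. Since $A\ge0$ is nonzero, this gives $w^\top A w\ge 2\min_i w_i^2>0$.

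Positivity of $w$ should come from a Neumann-series bound. Write $Z=I+N$, where $N$ has zero diagonal and entries at most $q$. Then $\|N\|_\infty\le (k-1)q$. If $\lambda$ is chosen so that $(k-1)q\le 1/3$, then
\[
w=\mathbf 1-N\mathbf 1+\sum_{m\ge2}(-N)^m\mathbf 1
\]
has every entry at least $1-(k-1)q-\frac{((k-1)q)^2}{1-(k-1)q}>0$.

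This exposes the real issue: $q$ must be small relative to $1/k$, so $\lambda$ cannot be a pure function of $\theta_0$. I would therefore take $\lambda=\lambda(\theta_0,n)=\theta_0^{-1}\ln(3n)$, which has polynomial size and still yields a valid metric instance for the fixed $\theta_0$.

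As a cross-check one could use a direct second-order expansion, $SP(S)=k-\mathbf 1^\top N\mathbf 1+O(\|N\|^2k)$. But the first-order gap $2|E(S)|(q-r)$ is only comparable to the error terms when $q$ is small, so the monotonicity and positivity argument above is the cleaner primary route.

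Finally, integrating $f'(t)<0$ over $[0,q-r]$ gives $SP_{\theta_0}(S)<\tau_k$ for every non-independent $S$. This completes the dichotomy, and hence establishes NP-hardness for every fixed $\theta_0>0$.
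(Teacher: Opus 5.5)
Your proposal is correct and is essentially the paper's proof. It uses the same \textsc{Independent Set} reduction with distances $\lambda$ and $2\lambda$, the same identity $f'=-w^{\top}Z'w$, and the same Neumann-series positivity of $w$. The only differences are cosmetic: you deform all edge entries at once along $tA_S$ instead of one edge distance at a time, and you calibrate $\lambda$ by $n$ rather than $k$. Just round $\lambda$ up, as the paper does with $\lceil\ln(4k)/\theta_0\rceil$, so the instance is finitely encodable; the bound $q\le 1/(3n)$ still holds.
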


\michael{
\begin{table}[h!]
\centering
\begin{tabular}{ll}
\toprule
Symbol & Meaning \\
\midrule
$X$ & finite ground set / metric space \\
$d$ & metric on $X$ \\
$S \subseteq X$ & selected subset \\
$k$ & prescribed subset cardinality \\
$\theta, \theta_0$ & kernel parameter; $\theta_0>0$ fixed in the problem definition \\
$SP_\theta(S)$ & Solow--Polasky diversity of $S$ \\
$Z$ & similarity matrix with entries $Z_{ij}=e^{-\theta d(x_i,x_j)}$ \\
$\mathbf{1}$ & all-ones vector \\
$I$ & identity matrix \\
$J=\mathbf{1}\mathbf{1}^{\top}$ & all-ones matrix \\
$G=(V,E)$ & graph instance of Independent Set \\
$\lambda$ & distance scale used in the reduction \\
$q$ & shorthand for $e^{-\theta_0\lambda}$ \\
$r$ & shorthand for $e^{-2\theta_0\lambda}=q^2$ \\
$F(t)$ & deformed objective value $\mathbf{1}^{\top}Z(t)^{-1}\mathbf{1}$ \\
$w(t)$ & vector $Z(t)^{-1}\mathbf{1}$ \\
$B(t)$ & off-diagonal perturbation in the decomposition $Z(t)=I+B(t)$ \\
$\|\cdot\|_\infty$ & maximum absolute row-sum norm \\
\bottomrule
\end{tabular}
\caption{Frequently used symbols.}
\end{table}
}

The reduction uses a simple metric with only two non-zero distances. However, unlike other diversity-based subset selection reductions, the distances are not just $0$, $1$, and $2$, but must be rescaled depending on $k$ and $\theta_0$.

This result also closes a natural open point left by the recent
comparative analysis of diversity indicators by Pereverdieva et al.~\cite{Pereverdieva2025}.
That work provided a broad theoretical comparison of different diversity
measures and cited or established NP-hardness results for them, while
the complexity status of Solow--Polasky subset selection remained open.
The present paper or note fills this gap for general metric spaces and thus can be seen as a sequel to the work by Pereverdieva et al..

\paragraph{Contributions.}
The contributions of this paper are as follows.

\begin{itemize}
\item We prove that maximizing Solow--Polasky diversity over subsets of
fixed size is NP-hard in general metric spaces.
\item The hardness result holds for every fixed kernel parameter
$\theta_0>0$.
\item The reduction uses a highly structured metric whose distance set is
contained in $\{0,\lambda,2\lambda\}$.
\item We provide a detailed proof that is easy to follow and makes the
strict-improvement argument explicit.
\item We clarify the role of monotonicity and continuity properties of
the diversity indicator and relate the result to total-distance,
max--min, and Riesz-energy based diversity optimization.
\end{itemize}

\section{Related Work and Context}

\subsection{Solow--Polasky and diversity axioms}

Solow and Polasky~\cite{SolowPolasky1994} introduced their diversity
measure as an axiomatic approach to biodiversity measurement based on
pairwise dissimilarities.
Their paper discusses properties such as monotonicity in varieties
(monotonicity under adding a genuinely new species), twinning, and
monotonicity in distance.
A key subtlety, however, is that the original paper explicitly states
that the Solow--Polasky indicator is \emph{not} monotone in distance
in general, and then conjectures that for the exponential kernel,
triangle inequality should be sufficient to restore monotonicity.

This issue is highlighted in Tamara Ulrich's work on diversity
optimization~\cite{UlrichBaderThiele2010,Ulrich2012}.
Ulrich adopts Solow--Polasky as one of the most important set-based
diversity measures for decision-space diversity, but also stresses that
the original proofs rely on assumptions that need not hold for arbitrary
pairwise distance matrices.
In particular, singular or nearly singular similarity matrices can cause
pathological values and monotonicity in distance is not known in full
generality for arbitrary numbers of points.

Huntsman~\cite{Huntsman2023} provides an additional theoretical angle
by relating Solow--Polasky diversity to magnitude.
This gives access to a broader conceptual framework for understanding the
indicator and suggests further applications in diversity optimization and
evolutionary search.
For the present work, the relevance of this connection is that the same
computational hardness result can be read both as a statement about
Solow--Polasky diversity and as a statement about magnitude-based subset
selection under exponential kernels.

Recent work by Pereverdieva et al.~\cite{Pereverdieva2025} places
these properties into a broader comparative framework.
That paper explicitly lists, among others, monotonicity in varieties,
twinning, monotonicity in distance, strict monotonicity in distance,
uniformity of optimal sets, computational effort, and subset selection
complexity as central theoretical properties of diversity indicators.
In their comparison, Solow--Polasky is listed as having monotonicity
in distance but not strict monotonicity in distance in the general
comparison table, while Riesz $s$-energy has both monotonicity and
strict monotonicity in distance, and max--min diversity has weak but
not strict monotonicity.

\subsection{Connection to other metric diversity objectives}

Our NP-hardness result fits naturally into the broader landscape of
metric diversity optimization.
The classical dispersion literature already shows that max--sum
(total distance, or sum of pairwise distances) and max--min objectives
are NP-hard even in specialized metric spaces~\cite{RaviRosenkrantzTayi1994}.

In the recent diversity-indicator literature, Pereverdieva et al.~\cite{Pereverdieva2025}
prove NP-hardness of Riesz $s$-energy subset selection in general metric spaces.
This is especially relevant because Riesz energy is another pairwise-distance
based indicator, but with a much more direct monotonicity structure:
increasing pairwise distances decreases energy and therefore strictly
improves the objective when the task is phrased as energy minimization.
By contrast, Solow--Polasky is defined via matrix inversion and is thus
more delicate analytically.

Taken together, these results suggest the following picture.
For several prominent diversity indicators in general metric spaces,
including total distance, max--min, Riesz energy, and now Solow--Polasky,
the associated cardinality-constrained subset selection problem is
computationally hard.
What differs across indicators is not so much the ultimate complexity
class, but rather the structural reasons behind the hardness and the
analytical tools needed to prove strict separation of good and bad
subsets.

\subsection{Why monotonicity and continuity matter here}

For max--sum diversity, strict improvement under increasing distances
is immediate from the objective definition.
For Riesz $s$-energy, strict monotonicity in distance is similarly
direct because enlarging distances strictly decreases the pairwise
interaction terms.
For max--min diversity, increasing distances cannot hurt, but strict
improvement may fail if the modified pair was not the bottleneck pair.

The Solow--Polasky case is more subtle.
The indicator depends on the full inverse of the similarity matrix,
so increasing one distance changes the objective in a nonlocal way.
This is precisely why the original high-level monotonicity principle
from Solow and Polasky is not, by itself, sufficient for the reduction
in this paper.
First, the original paper does not prove global monotonicity for all
metric instances; it explicitly notes counterexamples in general and
only conjectures that the exponential kernel with triangle inequality
should be safe.
Second, even weak monotonicity would not automatically give the
\emph{strict} separation required by the reduction:
we must show that every subset containing an edge has diversity
strictly smaller than an independent subset of the same cardinality.

For this reason, we prove a direct local monotonicity statement on the
family of matrices arising in our construction.
Continuity ensures that the objective varies smoothly under a one-parameter
deformation of one entry, and the derivative calculation shows that,
along this deformation, the objective increases strictly.
This is enough for the reduction and avoids relying on any unproved
global monotonicity claim.

\section{Continuity of the Diversity Indicator}

\michael{
\begin{proposition}[Continuity]
For every fixed finite set $S=\{x_1,\dots,x_k\}$, the map from the pairwise
distances to the Solow--Polasky diversity value,
\[
(d(x_i,x_j))_{1\le i,j\le k}
\longmapsto
SP_\theta(S),
\]
is continuous on the domain where the similarity matrix is nonsingular.
\end{proposition}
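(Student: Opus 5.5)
The plan is to write the map as a composition of three maps and check that each one is continuous on the relevant domain. Let $D=(d_{ij})$ be the vector of pairwise distances, viewed as a point of $\mathbb{R}^{k\times k}$ (or of $\mathbb{R}^{k(k-1)/2}$ if we use symmetry). The three maps are as follows.
\begin{enumerate}
\item $\Phi: D\mapsto Z(D)$ with $Z_{ij}=e^{-\theta d_{ij}}$.
\item $\Psi: Z\mapsto Z^{-1}$, defined on the open set $\mathrm{GL}_k(\mathbb{R})$ of nonsingular matrices.
\item $L: M\mapsto \mathbf{1}^{\top}M\mathbf{1}=\sum_{i,j}M_{ij}$.
\end{enumerate}
Then $SP_\theta = L\circ\Psi\circ\Phi$. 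The domain in the statement is $U=\Phi^{-1}(\mathrm{GL}_k(\mathbb{R}))$. Because $\Phi$ is continuous and $\mathrm{GL}_k(\mathbb{R})$ is open, $U$ is open, and the composition is well defined on $U$.

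Two of the three pieces are immediate. $\Phi$ is continuous because each entry is the exponential of a linear function of a single coordinate. $L$ is linear on a finite-dimensional space, hence continuous.

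The only step with content is the continuity of matrix inversion on $\mathrm{GL}_k(\mathbb{R})$. I would use Cramer's rule: $Z^{-1}=\operatorname{adj}(Z)/\det Z$. Each entry of $\operatorname{adj}(Z)$ is a cofactor, i.e.\ a polynomial in the entries of $Z$. $\det Z$ is also a polynomial and is nonzero on $\mathrm{GL}_k(\mathbb{R})$. So every entry of $Z^{-1}$ is a rational function whose denominator does not vanish on this set, and is therefore continuous (in fact real-analytic).

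An alternative, which fits the Neumann-series tools used later in the paper, is a perturbation argument. If $\|Z^{-1}\|\,\|\Delta\|<1$, then
\[
(Z+\Delta)^{-1}=\sum_{n\ge 0}(-Z^{-1}\Delta)^nZ^{-1},
\]
which gives the bound
\[
\|(Z+\Delta)^{-1}-Z^{-1}\|\le \frac{\|Z^{-1}\|^2\|\Delta\|}{1-\|Z^{-1}\|\|\Delta\|}.
\]
This tends to $0$ as $\Delta\to0$.

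Since a composition of continuous maps is continuous, $SP_\theta$ is continuous on $U$. The same argument shows it is smooth there, which is what the later derivative computation along the one-parameter deformation $Z(t)$ needs.

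There is no real obstacle. The one point to state carefully is that continuity is claimed only on the open set where $\det Z\neq0$. Off that set the formula $\mathbf{1}^{\top}Z^{-1}\mathbf{1}$ is undefined, and it can blow up near singular matrices, as noted in the discussion of Ulrich's work. Openness of $U$ is what makes continuity there meaningful.
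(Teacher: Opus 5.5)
Your proposal is correct and follows the same route as the paper: you write $SP_\theta$ as the composition of the entrywise exponential, matrix inversion on the nonsingular matrices, and the linear map $M\mapsto \mathbf{1}^{\top}M\mathbf{1}$. You go slightly beyond the paper by justifying continuity of inversion (via the adjugate formula or the Neumann-series perturbation bound) and by noting that the domain is open; the paper simply takes both as standard.
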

}

\begin{proof}
Each matrix entry
\[
Z_{ij}=e^{-\theta d(x_i,x_j)}
\]
depends continuously on the distances.
Matrix inversion is continuous on the set of nonsingular matrices.
Therefore the map
\[
Z \longmapsto \mathbf{1}^{\top} Z^{-1}\mathbf{1}
\]
is continuous wherever $Z$ is invertible.
Composing these maps yields the claim.
\end{proof}

\section{Reduction from Independent Set}
\label{sec:reduction}

\michael{We reduce from the classical NP-hard problem~\cite{GareyJohnson1979}.}

\begin{definition}[Independent Set]
Given a graph $G=(V,E)$ and an integer $k$, determine whether there
exists a subset $S\subseteq V$ with $|S|=k$ such that no two vertices
in $S$ are adjacent.
\end{definition}

\subsection{Encoding the graph as a metric}

Fix a constant $\theta_0>0$.
Given a graph $G=(V,E)$ we define a distance function on $V$ by

\[
d(u,v)=
\begin{cases}
0 & u=v,\\
\lambda & \{u,v\}\in E,\\
2\lambda & \{u,v\}\notin E,\ u\neq v,
\end{cases}
\]

where

\[
\lambda=\left\lceil \frac{\ln(4k)}{\theta_0}\right\rceil.
\]

\begin{lemma}
The function $d$ defines a metric.
\end{lemma}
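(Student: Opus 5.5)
We verify the three metric axioms directly from the definition, using only that $\lambda>0$. This holds because $k\ge 1$ in any nontrivial instance, so $\ln(4k)\ge\ln 4>0$; hence $\ln(4k)/\theta_0>0$ and its ceiling is a positive integer. For the degenerate case $k=0$ we simply set $\lambda=1$.

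Nonnegativity and identity of indiscernibles are immediate. We have $d(u,u)=0$, and for $u\neq v$ the value $d(u,v)\in\{\lambda,2\lambda\}$ is strictly positive. Symmetry holds because the case distinction depends only on the unordered pair $\{u,v\}$, whose membership in $E$ is symmetric.

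The only substantive step is the triangle inequality $d(u,w)\le d(u,v)+d(v,w)$, which we handle by cases on coincidences among $u,v,w$.

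\emph{Case 1: some two of the points coincide.} If $u=w$, the left side is $0$. If $u=v$ or $v=w$, the inequality reduces to $d(u,w)\le d(u,w)$.

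\emph{Case 2: $u,v,w$ are pairwise distinct.} Every nonzero distance lies in $[\lambda,2\lambda]$. The right-hand side is a sum of two such distances, so it is at least $\lambda+\lambda=2\lambda$. The left-hand side is at most $2\lambda$, so the inequality holds.

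In other words, any symmetric function whose off-diagonal values lie in an interval $[a,2a]$ with $a>0$ is a metric. Nothing here is a real obstacle; the only care needed is to confirm $\lambda>0$ from its definition, and the rest is a routine case check.
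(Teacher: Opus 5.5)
Your proof is correct and follows the same route as the paper. Both use symmetry and vanishing exactly on the diagonal, then get the triangle inequality from the fact that any two nonzero distances sum to at least $2\lambda$, which bounds every distance; your explicit check that $\lambda>0$ for $k\ge 1$, and the patch $\lambda=1$ for $k=0$ where $\ln(4k)$ is undefined, makes precise a point the paper leaves implicit.
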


\begin{proof}
The function is symmetric and vanishes exactly on the diagonal.
The only non-zero distances are $\lambda$ and $2\lambda$.
Since
\[
2\lambda \le \lambda+\lambda,
\]
{\andre and the non-negativity of $d$ } the triangle inequality holds for all triples of vertices.
\end{proof}

\subsection{Intuition}

Edges correspond to smaller distances while non-edges correspond to
larger distances.
Therefore subsets whose pairwise distances are all maximal are exactly
the independent sets of the graph.

\subsection{A small picture of the reduction}

Figure~\ref{fig:graph-metric} illustrates the transformation on a
four-vertex graph.

\begin{figure}[h]
\centering
\begin{tikzpicture}[node distance=1.5cm and 1.8cm, every node/.style={circle,draw,minimum size=7mm}]
\node (a1) at (0,1.2) {$1$};
\node (a2) at (1.5,2.0) {$2$};
\node (a3) at (1.5,0.4) {$3$};
\node (a4) at (3.0,1.2) {$4$};

\draw (a1) -- (a2);
\draw (a2) -- (a3);

\node[draw=none,rectangle] at (1.5,-0.6) {graph $G$: edges are $\{1,2\}$ and $\{2,3\}$};

\node[draw=none,rectangle] at (5.0,1.2) {$\Longrightarrow$};

\node[draw=none,rectangle,align=left] at (9.6,1.2) {metric encoding:\\[1mm]
$d(1,2)=d(2,3)=\lambda$,\\
$d(1,3)=d(1,4)=d(2,4)=d(3,4)=2\lambda$};

\end{tikzpicture}
\caption{Graph-to-metric transformation used in the reduction.}
\label{fig:graph-metric}
\end{figure}
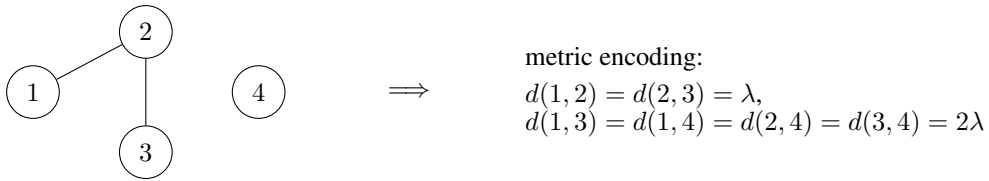

\subsection{Illustrative example}

Consider the graph with vertices $\{1,2,3,4\}$ and edges
\[
E=\{\{1,2\},\{2,3\}\}.
\]
Distances become
\[
d(1,2)=\lambda,\quad d(2,3)=\lambda,
\]
while
\[
d(1,3)=d(1,4)=d(2,4)=d(3,4)=2\lambda.
\]
The subset $\{1,3,4\}$ is independent, so all internal distances equal
$2\lambda$ and therefore produce maximal diversity among subsets of size
three. Any subset containing $\{1,2\}$ or $\{2,3\}$ has at least one
smaller distance $\lambda$ and therefore lower diversity.

\section{Similarity Parameters}

Define
\[
q=e^{-\theta_0\lambda},
\qquad
r=e^{-2\theta_0\lambda}.
\]

By construction,
\[
q\le \frac{1}{4k},
\qquad
r=q^2\le \frac{1}{16k^2}.
\]

\section{Diversity of Independent Sets}
\label{sec:divindset}
If $S$ is an independent set of size $k$, all pairwise distances in $S$
equal $2\lambda$.
\michael{
Hence the similarity matrix has the form

\[
Z=(1-r)I+rJ,
\]

where $I$ denotes the identity matrix, $\mathbf{1}$ denotes the all-ones
vector, and $J=\mathbf{1}\mathbf{1}^{\top}$ denotes the all-ones matrix.
}

A direct computation yields

\[
SP_{\theta_0}(S)=\frac{k}{1+(k-1)r}.
\]
{\andre This result easily emerges by solving the system of equations $Z w = \mathbf{1}$. Since in this case the system has precisely one solution. And the components of the unique column vector satisfying the equation are all equal to each other. Note also that the uniqueness of the weight vector $w$ proves the invertibility of $Z$ -- a fact which will be proved in the Appendix in more generality. }

\section{Strict Improvement When Increasing Distances}
\label{sec:analysis}

We now show, step by step, why a subset that contains an edge cannot be optimal.

\michael{Suppose a subset contains an edge $\{a,b\}$.}
Its corresponding similarity entry equals $q>r$.
The idea is to continuously increase the distance between $a$ and $b$
from $\lambda$ to $2\lambda$ and to track how the Solow--Polasky value changes.

Let $t\in[\lambda,2\lambda]$ and replace the distance $d(a,b)$ by $t$.
Let $Z(t)$ be the resulting similarity matrix and define

\[
F(t)=\mathbf{1}^{\top}Z(t)^{-1}\mathbf{1}.
\]

If we can show that $F(t)$ is strictly increasing in $t$, then replacing
an edge-distance $\lambda$ by the larger distance $2\lambda$ strictly
improves the objective.
This is exactly the kind of separation that the reduction needs.

\michael{The detailed derivative computation and the positivity argument are
deferred to Appendix~\ref{app:strict}. The key intermediate statement is
the following.}

\begin{lemma}
\michael{\label{lem:positive-w}}
For every $t\in[\lambda,2\lambda]$, all components of
$w(t)=Z(t)^{-1}\mathbf{1}$ are positive.
\end{lemma}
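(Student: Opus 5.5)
Write $Z(t)=I+B(t)$, where $B(t)$ is the off-diagonal part of the similarity matrix of the $k$-element subset $S$ after the distance $d(a,b)$ has been replaced by $t$. The plan is to show that $B(t)$ is small in the maximum absolute row-sum norm. Then $Z(t)$ is invertible with a convergent Neumann series, and $w(t)=Z(t)^{-1}\mathbf{1}$ lies entrywise close to $\mathbf{1}$, which forces every component to be positive.

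\textbf{Bounding $B(t)$.} Every off-diagonal entry of $Z(t)$ has the form $e^{-\theta_0 s}$ with $s\ge\lambda$. This holds for the original distances in $\{\lambda,2\lambda\}$ and for the deformed entry, since $t\in[\lambda,2\lambda]$. Hence every off-diagonal entry lies in $(0,q]$. The subset has $k$ points, so each row of $B(t)$ has at most $k-1$ nonzero entries, and
\[
\|B(t)\|_\infty\le (k-1)q\le \frac{k-1}{4k}<\frac14 ,
\]
using $q\le 1/(4k)$ from the choice of $\lambda$. Consequently $Z(t)$ is invertible for every $t$ in the interval, and
\[
Z(t)^{-1}=\sum_{m\ge0}(-B(t))^m .
\]

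\textbf{Positivity of $w(t)$.} Write $w(t)=\mathbf{1}+e(t)$ with
\[
e(t)=\sum_{m\ge1}(-B(t))^m\mathbf{1}.
\]
Since $\|\mathbf{1}\|_\infty=1$, we get
\[
\|e(t)\|_\infty\le \sum_{m\ge1}\|B(t)\|_\infty^m=\frac{\|B(t)\|_\infty}{1-\|B(t)\|_\infty}<\frac{1/4}{3/4}=\frac13 .
\]
Therefore every component satisfies $w_i(t)\ge 1-\tfrac13=\tfrac23>0$.

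An alternative route avoids the series. From $w=\mathbf{1}-Bw$ one has $\|w\|_\infty\le 1+\tfrac14\|w\|_\infty$, so $\|w\|_\infty\le 4/3$, and then $w_i\ge 1-\tfrac14\cdot\tfrac43=\tfrac23$.

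\textbf{Main obstacle.} There is no serious obstacle; the one point needing care is that the bound must be uniform in $t$. This rests on two facts. First, the deformed entry $e^{-\theta_0 t}$ never exceeds $q$ on the whole interval. Second, the subset has only $k$ points, so the calibration $\lambda=\lceil \ln(4k)/\theta_0\rceil$ gives a bound that does not depend on $|V|$. The argument also yields invertibility of $Z(t)$ along the deformation, which is what the derivative formula $F'(t)=-w^\top Z'(t)w$ used later will need.
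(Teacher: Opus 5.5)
Your proof is correct and follows the paper's argument essentially step for step. Both write $Z(t)=I+B(t)$, bound $\|B(t)\|_\infty\le (k-1)q<\tfrac14$ uniformly in $t$, use the convergent Neumann series, and bound the tail by a geometric series to get $w_i(t)>\tfrac23$. Your series-free variant via $w=\mathbf{1}-Bw$ is a harmless rephrasing of the same estimate. It only needs invertibility of $Z(t)$, which the paper's diagonal-dominance lemma in the appendix supplies.
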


\michael{\begin{proof}
See Appendix~\ref{app:strict}.
\end{proof}}

\michael{Using the derivative identity established in Appendix~\ref{app:strict},
one obtains
\[
F'(t)=2\theta_0 e^{-\theta_0 t}\,w_a(t)w_b(t).
\]
By Lemma~\ref{lem:positive-w}, both factors $w_a(t)$ and $w_b(t)$ are positive
for all $t\in[\lambda,2\lambda]$, and therefore}
\[
F'(t)>0
\qquad\text{for all } t\in[\lambda,2\lambda].
\]
Thus increasing the distance from $\lambda$ to $2\lambda$ strictly
increases the diversity value.
Repeating this replacement for each internal edge of a subset shows
that every non-independent subset has strictly smaller diversity
than an independent subset of the same size.

\section{Discussion: Strict vs.\ Weak Monotonicity}

The monotonicity issue deserves an explicit discussion because it
separates Solow--Polasky diversity from more direct distance-based
indicators.

For max--sum diversity, strict improvement under increasing distances is
immediate from the objective definition.
For Riesz $s$-energy, strict monotonicity in distance is similarly
direct because enlarging distances strictly decreases the pairwise
interaction terms.
For max--min diversity, increasing distances cannot hurt, but strict
improvement may fail if the modified pair is not the bottleneck pair.

For Solow--Polasky diversity the objective depends on the inverse of the
full similarity matrix, so increasing a single distance changes the
objective in a nonlocal fashion.
This is why the original monotonicity discussion of Solow and Polasky is
not sufficient on its own for the present reduction.
What we need in the proof is not merely a weak monotonicity principle,
but a strict separation statement for the concrete family of instances
constructed from the graph.
The derivative argument and positivity lemma provide precisely this.

\section{Main Result}
\label{sec:result}

We now state the main hardness result. The proof is based on the reduction
from \textsc{Independent Set} developed in Section~\ref{sec:reduction},
with the distance scale chosen explicitly as
\[
\lambda=\left\lceil \frac{\ln(4k)}{\theta_0}\right\rceil,
\]
so that the corresponding similarity parameters satisfy
\[
q=e^{-\theta_0\lambda}\le \frac{1}{4k},
\qquad
r=e^{-2\theta_0\lambda}=q^2\le \frac{1}{16k^2}.
\]
These bounds are exactly what is needed in the strict-monotonicity
argument from Section~\ref{sec:analysis} and Appendix~\ref{app:strict}.

\begin{theorem}
For every fixed constant $\theta_0>0$, the following problem is NP-hard:
given a finite metric space $(X,d)$ and an integer $k$ with
$0 \le k \le |X|$, compute a subset
\[
S^* \in \arg\max_{S \subseteq X,\ |S|=k} SP_{\theta_0}(S).
\]
\end{theorem}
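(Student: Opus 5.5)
We prove the theorem by a polynomial-time reduction from \textsc{Independent Set}, using the encoding of Section~\ref{sec:reduction}. Given an instance $(G,k)$ with $G=(V,E)$, we output the metric space $(V,d)$. Here $d(u,v)\in\{0,\lambda,2\lambda\}$ and $\lambda=\lceil \ln(4k)/\theta_0\rceil$. We keep the same cardinality $k$. The earlier lemma shows that $d$ is a metric. Since $\theta_0$ is a fixed constant, $\lambda$ is an integer with $O(\log k)$ bits, so the instance is written down in polynomial time. If $k\ge |V|$ or $k\le 1$, we decide the independent set question directly. Hence we may assume $2\le k<|V|$.

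The core claim is the following equivalence. $G$ has an independent set of size $k$ if and only if an optimal subset $S^*$ is independent. Equivalently, this holds if and only if $SP_{\theta_0}(S^*)= k/(1+(k-1)r)$. Given $S^*$, we can check independence directly from $E$. Thus an algorithm for the optimization problem decides \textsc{Independent Set}, and this is all NP-hardness requires. Comparing against the value $k/(1+(k-1)r)$ is not needed, which avoids exact-arithmetic issues with $e^{-\theta_0\lambda}$.

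For the equivalence, one direction is trivial: if $S^*$ is independent, then $G$ has an independent $k$-set. For the converse, let $S_0$ be an independent $k$-set. By Section~\ref{sec:divindset}, $SP_{\theta_0}(S_0)=k/(1+(k-1)r)$, and all independent $k$-sets attain this same value. Now let $S$ contain at least one internal edge, and list these edges as $e_1,\dots,e_m$. Define $S=S^{(0)},S^{(1)},\dots,S^{(m)}$, where $S^{(i)}$ is $S$ with the distances of $e_1,\dots,e_i$ raised to $2\lambda$. Each step deforms exactly one entry along $t\in[\lambda,2\lambda]$. Along the way, the other distances stay in $\{\lambda,2\lambda\}$, so the intermediate matrices belong to the family covered by Lemma~\ref{lem:positive-w}. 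By that lemma and the identity $F'(t)=2\theta_0e^{-\theta_0 t}w_a(t)w_b(t)>0$, each step strictly increases $F$. Continuity (the Continuity proposition) and invertibility on the whole segment justify integrating the derivative. The final matrix $S^{(m)}$ is $(1-r)I+rJ$, so $SP_{\theta_0}(S)<k/(1+(k-1)r)=SP_{\theta_0}(S_0)$. Consequently, no subset with an edge is optimal whenever an independent $k$-set exists.

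The main obstacle is the positivity and invertibility step behind Lemma~\ref{lem:positive-w}. It must hold uniformly for every intermediate matrix, including those where some but not all edge entries have been pushed from $q$ to values in $[r,q]$. Note that these intermediate matrices are not necessarily metric-induced, although the triangle inequality is irrelevant here. To handle this, write $Z=I+B$ with all off-diagonal entries of $B$ in $[0,q]$. Then $\|B\|_\infty\le (k-1)q<1/4$, so $Z$ is invertible and the Neumann series gives $w=\mathbf{1}-B\mathbf{1}+B^2\mathbf{1}-\cdots$. Bounding the tail gives each $w_i\ge 1-\|B\|_\infty/(1-\|B\|_\infty)>1-1/3>0$. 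This bound depends only on entrywise bounds, so it covers all intermediate stages. The choice $q\le 1/(4k)$ is exactly what makes this work.
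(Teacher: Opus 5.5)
Your proposal is correct and follows the paper's own proof: the same $\{0,\lambda,2\lambda\}$ encoding with $\lambda=\lceil \ln(4k)/\theta_0\rceil$, the value $k/(1+(k-1)r)$ for independent $k$-sets, and the edge-by-edge deformation using $F'(t)=2\theta_0 e^{-\theta_0 t}w_a(t)w_b(t)$, with positivity of $w(t)$ from the Neumann-series bound $\|B\|_\infty<1/4$. Your additions are sound refinements of details the paper leaves implicit: dispatching $k\le 1$ and $k\ge |V|$ directly, noting that the positivity bound uses only entrywise bounds and so covers every intermediate matrix, and deciding the instance by checking independence of the returned $S^*$ instead of comparing real values.
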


\begin{proof}[Proof sketch]
We reduce from \textsc{Independent Set}.
Given an instance $(G,k)$ with $G=(V,E)$, we construct the metric space
$(X,d)$ with $X=V$ as in Section~\ref{sec:reduction}, namely
\[
d(x,y)=
\begin{cases}
0, & x=y,\\
\lambda, & \{x,y\}\in E,\\
2\lambda, & \{x,y\}\notin E,\ x\neq y,
\end{cases}
\qquad
\text{where }
\lambda=\left\lceil \frac{\ln(4k)}{\theta_0}\right\rceil.
\]
By the lemma in Section~\ref{sec:reduction}, this is a metric, and the
construction is computable in polynomial time.

Now let $S\subseteq X$ with $|S|=k$.
Then $S$ is an independent set in $G$ if and only if every distinct pair
of points in $S$ has distance $2\lambda$.
In that case the corresponding similarity matrix has the special form
\[
Z=(1-r)I+rJ,
\]
and Section~\ref{sec:divindset} shows that such subsets
have Solow--Polasky value
\[
SP_{\theta_0}(S)=\frac{k}{1+(k-1)r}.
\]

If, on the other hand, $S$ contains an edge, then at least one pair in
$S$ has distance $\lambda$, so one off-diagonal similarity entry equals
$q$ instead of $r$.
By the strict-improvement argument from Section~\ref{sec:analysis},
whose derivative and positivity details are proved in
Appendix~\ref{app:strict}, replacing such a distance $\lambda$ by
$2\lambda$ strictly increases the Solow--Polasky value.
Hence every subset containing an edge has strictly smaller diversity than
an independent subset of the same cardinality.

Therefore $G$ has an independent set of size $k$ if and only if the
constructed Solow--Polasky instance has an optimal $k$-subset with all
pairwise distances equal to $2\lambda$.
Thus solving the Solow--Polasky subset selection problem would solve
\textsc{Independent Set}, and the problem is NP-hard.
\end{proof}

\begin{corollary}
For every fixed constant $\theta_0>0$, the Solow--Polasky subset
selection problem remains NP-hard when restricted to finite metric
spaces $(X,d)$ for which there exists a real number $\lambda>0$ such that
\[
d(x,y)\in\{0,\lambda,2\lambda\}
\qquad\text{for all }x,y\in X.
\]
\end{corollary}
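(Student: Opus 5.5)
The corollary follows from the proof of the main theorem, because the reduction already produces instances of the restricted type. Given an \textsc{Independent Set} instance $(G,k)$, I will build the same metric space $(X,d)$ with $X=V$ as in Section~\ref{sec:reduction}, using the scale $\lambda=\lceil \ln(4k)/\theta_0\rceil$. By construction every distance lies in $\{0,\lambda,2\lambda\}$, and $\lambda>0$ whenever $k\ge 1$. (The trivial cases $k=0$ and $k=1$ can be answered directly, or one may take $\lambda=1$ if $\ln(4k)/\theta_0$ would give $0$.) The metric lemma then shows that the instance is a legitimate finite metric space in the restricted class. The instance is also computed in polynomial time, since $\lambda$ is an integer of bit length $O(\log k+\log(1/\theta_0))$ and $\theta_0$ is a fixed constant.

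Next I will reuse the separation argument. If $S$ is an independent $k$-set, Section~\ref{sec:divindset} gives $SP_{\theta_0}(S)=k/(1+(k-1)r)$. If $S$ contains edges, I will raise the edge distances from $\lambda$ to $2\lambda$ one at a time. By Lemma~\ref{lem:positive-w} and the derivative identity $F'(t)=2\theta_0e^{-\theta_0 t}w_a(t)w_b(t)$, each step strictly increases $F$.

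The point I need to check is that every intermediate matrix also lies in the family covered by Appendix~\ref{app:strict}. That family consists of off-diagonal entries in $[r,q]$, with all other entries equal to $q$ or $r$ and $q\le 1/(4k)$. In this family $\|B(t)\|_\infty\le (k-1)q<1/4$, so the Neumann-series positivity argument applies uniformly. The intermediate metrics may not be of two-value type, and may not even satisfy the triangle inequality, but this does not matter: the monotonicity is only a property of the matrix path. The values we compare are at the endpoints, and those are genuine restricted instances. The conclusion is that every non-independent $k$-subset has value strictly below $k/(1+(k-1)r)$.

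Hence $G$ has an independent set of size $k$ if and only if an optimal $k$-subset of the restricted instance attains the value $k/(1+(k-1)r)$. Equivalently, it holds if and only if the optimal subset has all pairwise distances equal to $2\lambda$, which can be checked in polynomial time from the returned subset. So an algorithm for the restricted problem decides \textsc{Independent Set}, and the restricted problem is NP-hard. The only real obstacle is making sure the appendix's positivity bound covers every intermediate step of the edge-by-edge deformation, and this is exactly the uniform norm bound above.
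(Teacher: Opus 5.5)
Your proposal is correct and follows the paper's route: the corollary is immediate because the Independent Set reduction already outputs metrics with distances in $\{0,\lambda,2\lambda\}$. Your explicit check that every intermediate matrix in the edge-by-edge deformation has off-diagonal entries in $[r,q]$ is exactly what the appendix's positivity lemma covers, since that lemma is stated for this whole family. (A minor aside: the intermediate distance functions do in fact satisfy the triangle inequality, because all nonzero values lie in $[\lambda,2\lambda]$; as you note, this does not matter for the argument.)
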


\section{Conclusion}

This paper shows that, for every fixed $\theta_0>0$, maximizing
Solow--Polasky diversity over subsets of prescribed size is NP-hard in
general metric spaces.
This closes the complexity gap made explicit in the recent comparative
study of Pereverdieva et al.~\cite{Pereverdieva2025} and places
Solow--Polasky subset selection alongside total-distance/max--sum
diversity~\cite{RaviRosenkrantzTayi1994}, max--min diversity~\cite{Ghosh1996},
and Riesz-energy based subset selection~\cite{Pereverdieva2025} as
another computationally hard diversity optimization problem in general
metric spaces.

Technically, the proof is not a boilerplate reduction of the kind used
for more direct diversity objectives.
For max--sum and max--min, the effect of modifying one pairwise distance
is immediate from the objective definition.
For Solow--Polasky diversity, by contrast, the objective is defined
through the inverse of a similarity matrix, so a local change in one
distance propagates globally through the objective value.
Accordingly, weak monotonicity is not enough for the reduction.
What is needed is a strict separation between independent and
non-independent subsets, and this is established here only for the
bounded family of similarity matrices arising in the construction,
through a one-parameter deformation, a derivative formula for
$\mathbf{1}^{\top}Z^{-1}\mathbf{1}$, and a positivity argument for the
associated weight vector.

Several further routes suggest themselves from this result.
In particular, it would be natural to study the approximability of
Solow--Polasky subset selection, its parameterized complexity in the
sense of the broader diversity-of-solutions literature~\cite{Baste2022},
and its complexity on more structured metric classes such as Euclidean
spaces of fixed dimension or doubling metrics~\cite{Pellizzoni2023}.
These questions are compelling not only because they are standard next
steps after an NP-hardness result, but also because related complexity
research on diversity has already developed along precisely such lines
for structurally simpler objectives.

We have deliberately left these directions outside the scope of the
present paper in order not to overload it and to preserve a clear and
focused exposition of the main result and of the structural details of
its proof, in particular the strict-monotonicity argument needed in the
bounded regime arising in the reduction.
They are nonetheless well worth pursuing in future research.

\newpage

\appendix

\section{Appendix: Strict Improvement Argument}
\label{app:strict}

\michael{
This appendix contains the proof details used in
Section~\ref{sec:analysis}. In particular, we derive the formula for
$F'(t)$ and prove Lemma~\ref{lem:positive-w}.
}

\subsection{Derivative of the Solow--Polasky objective}

Let $t\in[\lambda,2\lambda]$ and let $Z(t)$ be the similarity matrix obtained
by replacing one internal edge-distance $d(a,b)$ by $t$.
Define
\[
F(t)=\mathbf{1}^{\top}Z(t)^{-1}\mathbf{1}.
\]

We differentiate the identity
\[
Z(t)^{-1}Z(t)=I.
\]
This gives
\[
\frac{d}{dt}Z(t)^{-1}
=
-\,Z(t)^{-1}Z'(t)Z(t)^{-1}.
\]

Therefore
\[
F'(t)
=
-\mathbf{1}^{\top}Z(t)^{-1}Z'(t)Z(t)^{-1}\mathbf{1}.
\]

Now define
\[
w(t)=Z(t)^{-1}\mathbf{1}.
\]
Then only the entries $(a,b)$ and $(b,a)$ of $Z(t)$ depend on $t$.
Both equal $e^{-\theta_0 t}$, so
\[
Z'_{ab}(t)=Z'_{ba}(t)=-\theta_0 e^{-\theta_0 t},
\]
and all other entries of $Z'(t)$ vanish.
Substituting this into the derivative formula yields
\[
F'(t)=2\theta_0 e^{-\theta_0 t}\,w_a(t)w_b(t).
\]

Thus $F'(t)>0$ will follow once we know that all components of $w(t)$ are
strictly positive.

\subsection{Positivity of \texorpdfstring{$w(t)$}{w(t)}}

\begin{lemma}
For every $t\in[\lambda,2\lambda]$, all components of
$w(t)=Z(t)^{-1}\mathbf{1}$ are positive.
\end{lemma}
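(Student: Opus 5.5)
We write $Z(t)=I+B(t)$, where $B(t)$ has zero diagonal and off-diagonal entries in $\{r,q,e^{-\theta_0 t}\}$. For $t\in[\lambda,2\lambda]$ we have $e^{-\theta_0 t}\le e^{-\theta_0\lambda}=q$, so every off-diagonal entry of $B(t)$ is nonnegative and at most $q\le 1/(4k)$. The matrix $Z(t)$ is $k\times k$, so each row of $B(t)$ has at most $k-1$ nonzero entries. Hence
\[
\|B(t)\|_\infty\le (k-1)q<\tfrac14 .
\]
It follows that the Neumann series $Z(t)^{-1}=\sum_{m\ge0}(-B(t))^m$ converges, so $Z(t)$ is invertible on the whole interval. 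This also justifies the differentiation performed above, and it gives continuity of $w(t)$, in line with the continuity proposition.

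Next we bound $w(t)$ componentwise. Write
\[
w(t)=\mathbf{1}-B(t)\mathbf{1}+\sum_{m\ge2}(-B(t))^m\mathbf{1}.
\]
For the linear term, each component satisfies $0\le (B(t)\mathbf{1})_i\le \|B(t)\|_\infty<1/4$. For the tail we use $\|(-B)^m\mathbf{1}\|_\infty\le\|B\|_\infty^m$ and sum the geometric series:
\[
\Bigl\|\sum_{m\ge2}(-B)^m\mathbf{1}\Bigr\|_\infty\le \frac{\|B\|_\infty^2}{1-\|B\|_\infty}<\frac{1/16}{3/4}=\frac1{12}.
\]
Each component therefore satisfies $w_i(t)\ge 1-\tfrac14-\tfrac1{12}=\tfrac23>0$, which proves the lemma.

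A simpler route avoids splitting off the linear term. From $w=\mathbf{1}-Bw$ we get $\|w\|_\infty\le 1/(1-\|B\|_\infty)<4/3$. Then $w_i\ge 1-\|B\|_\infty\|w\|_\infty>1-\tfrac14\cdot\tfrac43=\tfrac23$. This is the version I would write.

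The main thing to check carefully is the uniform bound $\|B(t)\|_\infty<1$ with margin over the whole interval. That bound depends on the choice $\lambda=\lceil \ln(4k)/\theta_0\rceil$, which gives $q\le 1/(4k)$, and on the monotonicity $e^{-\theta_0 t}\le q$ for $t\ge\lambda$. When the lemma is used for repeated edge replacements, the other edge entries of $B$ may still equal $q$ or may already have been lowered to $r$. In every case each entry is at most $q$, so the bound and the argument apply unchanged at every stage.
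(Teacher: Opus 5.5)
Your proof is correct and follows the paper's argument. It uses the same splitting $Z(t)=I+B(t)$ with $\|B(t)\|_\infty\le (k-1)q<\tfrac14$, Neumann-series invertibility, and the componentwise bound $w_i(t)\ge 1-\|B(t)\|_\infty/(1-\|B(t)\|_\infty)>\tfrac23$; your preferred variant via $w=\mathbf{1}-Bw$ only rearranges this and gives exactly the same bound without summing the series explicitly.
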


\begin{proof}
Write
\[
Z(t)=I+B(t),
\]
where $B(t)$ has zero diagonal and off-diagonal entries in the interval
$[r,q]$.
Hence, for every row $i$,
\[
\sum_{j\ne i}|b_{ij}(t)|
=
\sum_{j\ne i} b_{ij}(t)
\le (k-1)q
\le \frac{k-1}{4k}
< \frac14.
\]
\michael{In particular},
\[
\|B(t)\|_\infty
=
\max_{1\le i\le k}\sum_{j=1}^k |b_{ij}(t)|
=
\max_{1\le i\le k}\sum_{j\ne i} |b_{ij}(t)|
<1.
\]
Therefore $I+B(t)$ is invertible by the standard row-diagonal-dominance
criterion. For completeness {\andre sake}, we state {\andre this as a lemmma with a proof }
in Appendix~\ref{app:invertibility}.

Since $\|B(t)\|_\infty<1$, the Neumann series converges and gives
\[
Z(t)^{-1}=(I+B(t))^{-1}=\sum_{m=0}^{\infty}(-B(t))^m.
\]
Applying this identity to the all-ones vector gives
\[
w(t)=\sum_{m=0}^{\infty}(-B(t))^m\mathbf{1}.
\]

For each component we obtain the lower bound
\[
w_i(t)\ge
1-\sum_{m=1}^{\infty}\|B(t)\|_\infty^m
>
1-\sum_{m=1}^{\infty}\left(\frac14\right)^m
=
1-\frac{1/4}{1-1/4}
=
\frac23>0.
\]
Thus every component of $w(t)$ is strictly positive.
\end{proof}

\section{Appendix: Matrix Invertibility Criterion}
\label{app:invertibility}

\michael{
The following elementary lemma is the clean formulation of the
invertibility fact used implicitly in the Neumann-series argument.
}

\begin{lemma}
\michael{\label{lem:IplusB-invertible}
Let $B=(b_{ij})\in\mathbb{R}^{N\times N}$ satisfy
\[
b_{ii}=0 \qquad \text{for all } i=1,\dots,N,
\]
and
\[
\sum_{j\ne i}|b_{ij}|<1
\qquad \text{for all } i=1,\dots,N.
\]
Then the matrix $I+B$ is invertible.}
\end{lemma}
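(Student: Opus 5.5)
The plan is to show that $I+B$ has trivial kernel, since for a square matrix injectivity is equivalent to invertibility. Let $x\in\mathbb{R}^N$ with $(I+B)x=0$, and suppose for contradiction that $x\neq 0$. Choose an index $i$ at which $|x_i|=\|x\|_\infty>0$.

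Reading off the $i$-th row of $(I+B)x=0$ and using $b_{ii}=0$, we get
\[
x_i=-\sum_{j\ne i} b_{ij}x_j .
\]
Taking absolute values and applying the triangle inequality gives
\[
|x_i|\le \sum_{j\ne i}|b_{ij}|\,|x_j|\le \Bigl(\sum_{j\ne i}|b_{ij}|\Bigr)\|x\|_\infty<\|x\|_\infty=|x_i|.
\]
This is a contradiction, so $x=0$, and hence $I+B$ is invertible.

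An alternative route is to note that $\|B\|_\infty<1$ and invoke the Neumann series $\sum_m(-B)^m$, which converges and inverts $I+B$. However, that argument relies on the convergence facts we are trying to justify, so the kernel argument is cleaner and self-contained. There is no real obstacle here. The only step needing care is choosing the maximizing index $i$, which is what makes the strict inequality usable.
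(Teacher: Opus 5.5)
Your proof is correct and is essentially the paper's argument: a nonzero kernel vector, an index of maximal modulus, and that row together with $b_{ii}=0$ and the strict row-sum bound give a contradiction (the paper divides by $|x_o|$ to get $1\le\sum_{j\ne o}|b_{oj}|$, while you keep the strict inequality inline, which amounts to the same thing). One minor correction to your aside: the Neumann-series route is not circular, because $(I+B)\sum_{m=0}^{M}(-B)^m = I-(-B)^{M+1}\to I$ when $\|B\|_\infty<1$ already produces an inverse without using this lemma.
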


\begin{proof}
\michael{
Let $M:=I+B$. Suppose, for contradiction, that $M$ is not invertible.
Then there exists a nonzero vector $x\in\mathbb{R}^N$ such that
\[
Mx=0.
\]
Choose an index $o\in\{1,\dots,N\}$ such that
\[
|x_o|=\max_{1\le i\le N}|x_i|.
\]
Since $x\neq 0$, we have $|x_o|>0$.

Now consider the $o$-th component of the equation $Mx=0$.
Since $m_{oo}=1+b_{oo}=1$ and $m_{oj}=b_{oj}$ for $j\ne o$, we obtain
\[
x_o+\sum_{j\ne o}b_{oj}x_j=0,
\]
and hence
\[
x_o=-\sum_{j\ne o}b_{oj}x_j.
\]
Taking absolute values and using the choice of $o$, we get
\[
|x_o|
\le
\sum_{j\ne o}|b_{oj}|\,|x_j|
\le
\left(\sum_{j\ne o}|b_{oj}|\right)|x_o|.
\]
Because $|x_o|>0$, division by $|x_o|$ yields
\[
1\le \sum_{j\ne o}|b_{oj}|,
\]
which contradicts the assumption
\[
\sum_{j\ne o}|b_{oj}|<1.
\]
Therefore $M=I+B$ is invertible.}
\end{proof}

\section{Appendix: Matrix Differentiation and the Neumann Series}

This appendix briefly summarizes two standard tools used in the proof of
strict improvement.
The goal is simply to make the argument more self-contained for readers
who may not work with matrix calculus every day.

\subsection{Differentiating the inverse of a matrix}

Let $Z(t)$ be a differentiable family of invertible matrices.
We start from the identity
\[
Z(t)^{-1} Z(t) = I.
\]
Differentiating both sides with respect to $t$ and using the product rule gives
\[
\frac{d}{dt}Z(t)^{-1}\, Z(t) + Z(t)^{-1} Z'(t) = 0.
\]
Multiplying on the right by $Z(t)^{-1}$ yields
\[
\frac{d}{dt}Z(t)^{-1} = - Z(t)^{-1} Z'(t) Z(t)^{-1}.
\]
This is the matrix-derivative identity used in the proof.

Now define
\[
F(t)=\mathbf{1}^{\top} Z(t)^{-1}\mathbf{1}.
\]
Then
\[
F'(t)
=
\mathbf{1}^{\top}\frac{d}{dt}Z(t)^{-1}\mathbf{1}
=
-\mathbf{1}^{\top} Z(t)^{-1} Z'(t) Z(t)^{-1}\mathbf{1}.
\]
If we write
\[
w(t)=Z(t)^{-1}\mathbf{1},
\]
this becomes
\[
F'(t) = -\, w(t)^{\top} Z'(t) w(t).
\]
So the derivative of the scalar objective is reduced to a quadratic form
in the derivative of the matrix.

In our setting, only the two symmetric entries $(a,b)$ and $(b,a)$
depend on $t$, and both are equal to $e^{-\theta_0 t}$.
Hence
\[
Z'_{ab}(t)=Z'_{ba}(t)=-\theta_0 e^{-\theta_0 t},
\]
and all other entries of $Z'(t)$ vanish.
Plugging this into the previous formula yields
\[
F'(t)=2\theta_0 e^{-\theta_0 t}\, w_a(t)w_b(t),
\]
which is the expression used in the proof.

\subsection{The Neumann series}

The Neumann series is the matrix analogue of the geometric series.
Recall the scalar identity
\[
\frac{1}{1+x}=1-x+x^2-x^3+\cdots
\qquad\text{for } |x|<1.
\]
For matrices, the corresponding statement is that if $\|B\|<1$ for a
submultiplicative matrix norm, then
\[
(I+B)^{-1} = \sum_{m=0}^{\infty} (-B)^m
= I - B + B^2 - B^3 + \cdots.
\]

A quick way to verify this is to look at the partial sums
\[
S_M=\sum_{m=0}^{M}(-B)^m.
\]
Then
\[
(I+B)S_M = I - (-B)^{M+1}.
\]
If $\|B\|<1$, then $(-B)^{M+1}\to 0$ as $M\to\infty$, so the partial sums
converge to the inverse:
\[
(I+B)\sum_{m=0}^{\infty}(-B)^m = I.
\]

In our proof we write
\[
Z(t)=I+B(t),
\]
where $B(t)$ has zero diagonal and small off-diagonal entries.
The bound
\[
\|B(t)\|_\infty < 1
\]
guarantees that the Neumann series converges:
\[
Z(t)^{-1} = (I+B(t))^{-1} = \sum_{m=0}^{\infty}(-B(t))^m.
\]
Applying this expansion to the all-ones vector gives
\[
Z(t)^{-1}\mathbf{1} = \sum_{m=0}^{\infty}(-B(t))^m\mathbf{1}.
\]
This is useful because the leading term is easy to understand, and the
remaining terms can be bounded by a geometric series.

\subsection{Remark on terminology}

The Neumann series is named after John von Neumann.
In the present paper, it is used only in a very basic way: as a convenient
tool to expand the inverse of a matrix whose off-diagonal part is small.

\section{Appendix: Numerical example}

As a small numerical sanity check of the reduction, we evaluated the
Solow--Polasky objective directly for one of the toy instances arising
from the construction.
This is only an illustration and not part of the proof.

Consider the graph on vertex set $\{1,2,3,4\}$ with edge set
\[
E=\{\{1,2\},\{2,3\}\},
\]
and let $k=3$.
Fix $\theta_0=1$.
Then the reduction chooses
\[
\lambda=\left\lceil \ln(4k)\right\rceil=\lceil \ln(12)\rceil=3,
\]
so that
\[
q=e^{-\lambda}=e^{-3}\approx 0.049787,
\qquad
r=e^{-2\lambda}=e^{-6}\approx 0.002479.
\]

The unique independent set of size $3$ is
\[
S_{\mathrm{ind}}=\{1,3,4\}.
\]
Its similarity matrix is
\[
Z_{\mathrm{ind}}=
\begin{pmatrix}
1 & r & r\\
r & 1 & r\\
r & r & 1
\end{pmatrix}
=(1-r)I+rJ.
\]
Hence
\[
SP_{\theta_0}(S_{\mathrm{ind}})
=
\frac{3}{1+2r}
\approx
\frac{3}{1+2e^{-6}}
\approx
2.985201.
\]

Now consider the non-independent subset
\[
S_{\mathrm{bad}}=\{1,2,4\},
\]
which contains the edge $\{1,2\}$.
Its similarity matrix is
\[
Z_{\mathrm{bad}}=
\begin{pmatrix}
1 & q & r\\
q & 1 & r\\
r & r & 1
\end{pmatrix}.
\]
By symmetry, if $w=Z_{\mathrm{bad}}^{-1}\mathbf{1}$ then
$w_1=w_2=a$ and $w_3=b$, where
\[
(1+q)a+rb=1,
\qquad
2ra+b=1.
\]
Solving gives
\[
a=\frac{1-r}{1+q-2r^2},
\qquad
b=\frac{1+q-2r}{1+q-2r^2},
\]
and therefore
\[
SP_{\theta_0}(S_{\mathrm{bad}})
=
2a+b
=
\frac{3+q-4r}{1+q-2r^2}
\approx
2.895737.
\]

Thus
\[
SP_{\theta_0}(S_{\mathrm{ind}})
>
SP_{\theta_0}(S_{\mathrm{bad}}),
\]
numerically confirming the behavior predicted by the reduction:
the independent subset has strictly larger Solow--Polasky value than a
subset of the same size containing an edge.

A short Python script reproducing this and a few similar checks is
available at\\ \url{https://github.com/emmerichmtm/SolowPolaskyReductionFromMaxIS}.

\label{tab:numerical-checks}

\end{document}